\newif\ifFull
\def\@begintheorem#1#2{\sl \trivlist \item[\hskip \labelsep{\bf #1\ #2:}]}
\def\@opargbegintheorem#1#2#3{\sl \trivlist
      \item[\hskip \labelsep{\bf #1\ #2\ #3:}]}
\begin{document}

% \numberofauthors{2} %  in this sample file, there are a *total*
% \author{
% You can go ahead and credit any number of authors here,
% e.g. one 'row of three' or two rows (consisting of one row of three
% and a second row of one, two or three).
%
% The command \alignauthor (no curly braces needed) should
% precede each author name, affiliation/snail-mail address and
% e-mail address. Additionally, tag each line of
% affiliation/address with \affaddr, and tag the
% e-mail address with \email.
%
% 1st. author
% \alignauthor
% Michael Mitzenmacher\titlenote{Harvard University, 
% School of Engineering and Applied Sciences, michaelm@eecs.harvard.edu. This work was supported in part by NSF grants CCF-1320231, CNS-1228598, IIS-0964473, and CCF-0915922.}\\
%       \affaddr{Harvard University}\\
%       \affaddr{School of Engineering and Applied Sciences}\\
%       \email{michaelm@eecs.harvard.edu}
%% 2nd. author
% \alignauthor
% Rasmus Pagh\titlenote{Supported by the Danish National Research Foundation under the Sapere Aude program.}\\
%       \affaddr{IT University of Copenhagen}\\
%        \email{pagh@itu.dk}
% }

\title{Simple Multi-Party Set Reconciliation}
\author{Michael Mitzenmacher\inst{1}\thanks{Supported in part by NSF grants CCF-1320231, CNS-1228598, IIS-0964473, and CCF-0915922.} \and Rasmus Pagh\inst{2}\thanks{Supported by the Danish National Research Foundation under the Sapere Aude program.}}
\institute{Harvard University, School of Engineering and Applied Sciences, \email{michaelm@eecs.harvard.edu}
 \and IT University of Copenhagen, \email{pagh@itu.dk}}
\date{}

\maketitle

\begin{abstract}
%As users migrate information to cloud storage, 
Many distributed
cloud-based services use multiple loosely consistent replicas of user
information to avoid the high overhead of more tightly coupled
synchronization.  
Periodically, the information must be
synchronized, or reconciled.  One can place this problem in the
theoretical framework of {\em set reconciliation}: two parties $A_1$
and $A_2$ each hold a set of keys, named $S_1$ and $S_2$ respectively,
and the goal is for both parties to
obtain $S_1 \cup S_2$.  Typically, set reconciliation is interesting
algorithmically when sets are large but the set difference
$|S_1-S_2|+|S_2-S_1|$ is small.  In this setting the focus is on
accomplishing reconciliation efficiently in terms of communication;
ideally, the communication should depend on the size of the set
difference, and not on the size of the sets.

In this paper, we extend recent approaches using Invertible Bloom
Lookup Tables (IBLTs) for set reconciliation to the multi-party
setting.  There are three or more parties
$A_1,A_2,\ldots,A_n$ holding sets of keys $S_1,S_2,\ldots,S_n$
respectively, and the goal is for all parties to obtain $\cup_i S_i$.
Whiel this could be done by pairwise reconciliations, we seek
more effective methods.  Our general approach can function even if the number
of parties is not exactly known in advance, and with some additional cost
can be used to determine which other parties hold missing keys.  

Our methodology uses network coding techniques in conjunction with
IBLTs, allowing efficiency in network utilization along with
efficiency obtained by passing messages of size $O(|\cup_i S_i - \cap_i S_i|)$.
By connecting reconciliation with network coding, we can provide 
efficient reconciliation methods for a number of natural 
distributed settings.  
\end{abstract}

{\bf Key words:} set reconciliation; synchronization; multi-party; network coding; hashing; sketching.
% From 2013 IEEE Taxonomy:
%Computer networks;
%Next generation networking;
%Information exchange;
%Distributed information systems

% \newpage

% \setcounter{page}{1}

\medskip

\section{Introduction} 

As users migrate information to cloud storage, the burden of
reliability moves to the cloud provider.  Thus many cloud vendors such
as Amazon \cite{dynamo} and Azure \cite{bradpaper} use multiple
loosely consistent replicas of user information because of the high
overhead of keeping replicas synchronized at all times.  Further,
users often retain copies of their information on laptops, tablets,
phones and Personal Digital Assistants (PDAs); these devices are often disconnected from cloud
storage and thus can diverge from the corresponding copies in the
cloud.  The situation naturally grows even more complicated when
multiple users have access to information, because the number of replicas can
increase with the number of users.  Periodically, however 
copies of information objects must be synchronized or {\em reconciled}.  One can also view
the need for reconciliation at a higher level, such as for loosely 
consistent replicas of large databases that may be used for availability
by information providers.

This paper focuses on the basic problem of set
reconciliation.  In the 2-party setting, two parties $A_1$ and $A_2$
respectively have (usually very similar) sets $S_1$ and $S_2$, and want to
reconcile so both have $S_1 \cup S_2$.  Our major contribution is to
extend the recent approach to set reconciliation for two parties using
Invertible Bloom Lookup Tables (IBLTs) to the multi-party setting,
where there are three or more parties holding sets
$S_1,S_2,S_3,\ldots,S_n$, and the goal is for all parties to obtain
$\cup_i S_i$.  This could of course be done by pairwise
reconciliations, but we seek more efficient methods.  We first extend
the IBLT approach, showing that in the multi-part setting we can
reconcile using messages of size $O(|\cup_i S_i - \cap_i S_i|)$.  This
generalizes results from the two-party setting, where the information
theoretic goal has been to send information close to the size of the
set difference, rather than sending information proportional to the size
of the sets, as generally the set difference is very small compared to
the set sizes.  Our approach has other advantages, including that one
does not need to know the number of parties in advance.  Our main approach
here, related to network coding, is to think of the information stored in
the IBLT as corresponding to vectors over a suitable finite field instead
of the binary vectors used in previous work.

We further show that our methodology allows using further network
coding techniques in conjunction with IBLTs, providing additional
efficiency in terms of network utilization.  By connecting
reconciliation with network coding, we can provide more efficient
reconciliation methods that apply to a number of natural distributed
computing problems.  For example, using recent results from gossip
algorithms, we show that multi-party set reconciliation over a network
with $n$ nodes can be done in $O(\phi^{-1} \log n)$ rounds of
communication with IBLTs, where $\phi$ is the conductance of the network. 

While our work can be seen as a specific example of a linear sketch
that has a natural affinity to the network coding approach, we believe
it suggests that other linear sketch-based data structures may also 
find expanded use by combining them with ideas from network coding.

\subsection{Potential Applications and Related Work}

The use of IBLTs for distributed synchronization has already been
proposed for specific applications.  For example, recently the Bitcoin
community has considered using IBLTs for scalable synchronization of
transactions\footnote{See also \url{http://www.reddit.com/r/Bitcoin/comments/2hchs0/scaling_bitcoin_gavin_begins_work_on_invertible/}
for further discussion.} \cite{github,CCN}. Multi-party variations would
be potentially useful in the Bitcoin setting, where multiple parties
may need to track transactions.

In the setting of data centers, as shown in the survey of Bailis
and Kingsbury~\cite{Bailis:2014:NR:2639988.2655736}, network errors abound 
in cost-effective large-scale environments.
Thus, even if we attempt to keep multiple copies of data synchronized,
the synchronization will periodically fail along with the network, leaving
the problem of reconciling the differences.

As related work, we note that a different generalization of the set reconciliation problems, to
settings where a certain type of \emph{approximate} reconciliation is desired, was
recently considered in the database
community~\cite{Chen:2014:RSR:2588555.2610528}. However, that work
focuses on the setting of two parties, leaving the question of scaling
to many parties open.

Another related model for problems on distributed data is that
of \emph{distributed tracking} (see
e.g.~\cite{Huang:2012:RAT:2213556.2213596}).  Our problem
differs from distributed tracking in two respects: 
we focus on exact computation (with an arbitrarily small error
probability), and we focus on periodic, as opposed to continuous,
computation of the joint function.

Several further specific applications for set difference structures are given
in \cite{setdifference}, including peer-to-peer transactions,
deduplication, partition healing, and synchronizing parallel
activations (e.g., of independent crawlers of a search engine).  They
also discuss why logging as an alternative may have disadvantages in
multiple contexts; an example they provide is for ``hot'' data items
that are written often and may therefore be in the log multiple times.
We refer the reader to this paper for more information on these
examples.  Multi-party variations of IBLT-based
synchronization methods could enhance their desirability in
these applications when multiple parties naturally arise.
In synchronizing parallel activations, for instance, several agents
in a distributed system could be gathering information into 
local databases in a redundant fashion for near-optimal accuracy in
the collection process, and then need to reconcile these local databases
into a synchronized whole.

\subsection{Background}

We briefly summarize known results for the historically common setting
of two parties with direct communication.  Consider two parties $A_1$
and $A_2$ with sets $S_1$ and $S_2$ of keys from a universe $U$.  An
important value is the size of the set difference  between $S_1$ and $S_2$, 
denoted by $d = |S_1-S_2|+|S_2-S_1| = |(S_1 \cup S_2) - (S_1 \cap S_2)|$.  In this setting
there are communication-efficient algorithms when $d$, or a good approximate upper
bound for $d$, is known.  Hence, in some algorithms for set
reconciliation, there are two phases: in a first phase a bound on $d$
is obtained, which then drives the second phase of the algorithm,
where reconciliation occurs.  See \cite{minsky2003set} for further
discussion on this point.    

One previous approach to set reconciliation uses characteristic
polynomials, in a manner reminiscent of Reed-Solomon
codes \cite{minsky2003set}.  Treating keys as values, $A_1$
considers the characteristic polynomial
$$\chi_{S_1}(Z) = \prod_{x_i \in S_1} (Z-x_i),$$
and similarly $A_2$ considers $\chi_{S_2}(Z)$.  
Observe that in the rational function
$$\frac{\chi_{S_1}(Z)}{\chi_{S_2}(Z)},$$
common terms cancel out, leaving a rational function in $Z$
where the sums of the degrees of the numerator and denominator is at most $d$.
The rational function can be determined through interpolation by evaluating the function at $d+1$ points
over a suitably large field;  hence, if $A_1$ sends the value of $\chi_{S_1}(Z)$ at $d+1$
points and $A_2$ sends the value of $\chi_{S_2}(Z)$ at the same (pre-chosen) $d+1$ points, then
the two parties can reconcile their set difference.  If the range can be embedded in a field of size
$q$, the total number of bits sent in each direction would be approximately $(d+1) \lceil \log_2 q \rceil$.   
Note that this takes $O(d^3)$ operations using standard Gaussian elimination techniques. 
Similar ideas underlie similar results by Juels and Sudan \cite{JS};  these ideas can be extended to use
other codes, such as BCH codes, with various computational tradeoffs \cite{DORS}. 

Recent methods for set reconciliation have centered on using
randomized data structures, such as the Invertible Bloom Filter or the
related but somewhat more general Invertible Bloom Lookup Table (IBLT)
\cite{IBF,setdifference,IBLT}.  For our purposes, the Invertible Bloom Lookup Table is
a randomized data structure storing a set of keys that
supports insertion and deletion operations, as well as a listing
operation that lists the keys in the structure.  We review
the use of IBLTs for 2-party set reconciliation in
Section~\ref{sec:review}.  The main effect of using IBLTs is that one
can give up a small constant factor in the data transmitted to obtain speed and
simplicity that is desirable for many implementations.  2-party reconciliation
using IBLTs can generally be done in linear time, using primarily hashing and XOR
operations.  As we show in this paper, the use of IBLTs can also be extended to 
multi-party reconciliation.

While the theory set reconciliation among two parties has been widely
studied, there appears to be no substantial prior work (that we are
aware of) specifically examining the theory of multi-party
reconciliation schemes, although the question of how to implement them
was raised in \cite{setreconc}.\footnote{Some preliminary results for
multi-party settings, also using the IBLT framework but based on
pairwise reconciliations, were provided to us by Goyal and
Varghese \cite{GV}. 
Also, after the appearance of this work on the arxiv, multi-party set reconciliation
using characteristic polynomials and repeated pairwise reconciliations was examined in \cite{BoralM};  their conclusion is that is while it is possible,
it currently seems much less efficient than the methods considered here.}
Special cases, such as rumor spreading (see,
e.g., \cite{chierichettialmost,DebMedard,demers,giakkoupis,haeupler,shah}),
where one (or more) parties have a single key to share with everyone,
have been studied, however.  

In more practical work, reconciliation among multiple parties has been
studied, often in the context of distributed data distribution, using
such techniques as erasure coding and Bloom filters to enhance
performance or reduce the overall amount of data transferred
(e.g., \cite{bcmr,bullet,lcsr,setdifference}).  However, these works
are also based on pairwise reconciliation, and some do not attempt to 
achieve data transmission proportional to the size of the set difference,
which is our goal here.  The work closest to ours is \cite{setdifference},
which also uses Invertible Bloom Filters, but is focused on pairwise reconciliation.  

\section{Review: The 2-Party Setting}
\label{sec:review}

We review 2-party set reconciliation, using the framework of the
Invertible Bloom Filter/ Invertible Bloom Lookup Table
(IBLT) \cite{IBF,setdifference,IBLT}.  More concretely, we first
describe an IBLT and its use for set reconciliation.  IBLTs store
keys, which here we will think of as fixed-length bit strings.
%  we assume 0 is not a key throughout.  (** Check if we need this. **) RP: Think not.
An IBLT is designed with respect to a threshold number of keys, $t$, so
that listing will be successful with high probability if the actual
number of keys in the structure at the time of a listing operation is
less than or equal to $t$.  
An IBLT ${\cal B}$ consists of a lookup table $T$
of $m$ cells initialized with all entries set to 0, where $m$ is
$O(t)$, and the constant factor in the order notation is generally small (between 1 and 2)
depending on the parameters chosen.
Like a standard Bloom filter, an
IBLT uses a set of $k$ random hash functions, $h_1$, $h_2$, $\ldots$,
$h_k$, to determine where keys are stored.\footnote{To
obtain structures where $m/t$ is very close to 1, one must use
irregular IBLTs, where different keys utilize a different number of
hash functions.  We simplify our description here and use regular
IBLTs, where the same number of hash functions are used for each key.
See \cite{IBLT,Rink} for more discussion.}  For simplicity, we assume
random hash functions here, and for technical reasons we assume that
the hashes of each key yield distinct locations; hence the $k$ hashes
yield a uniform subset of $k$ distinct cells from the $m \choose k$ possibilities.
Alternatively, this could easily be accomplished, for
instance, by splitting the table into $k$ subtables and having the
$i$th hash function choose a location independently and uniformly at
random in the $i$th subtable, which
would not asymptotically change the thresholds \cite{BWZ}.

%For set reconciliation, the values in our key-value pairs will be
%hashes $H(x)$ for a key $x$.  
The IBLT uses a hash function $H$ that maps keys to hash values 
in a large range of size $q$ (where $q$ will be chosen later to bound the 
probability of error).  
The IBLT uses a hash function $H$ that maps keys to hash values 
in a large range of size $q$ (where $q$ is a power of 2 that will be chosen later to bound the probability of error).  
For the purpose of this paper we assume that $H$ is a
fully random hash function\footnote{We note that it is possible to show that
$H(x) = (a^x \bmod p) \bmod q$, where $1<a<p-1$ is a random positive integer and $p>q^2$,
has the needed properties when all keys are smaller than $q$.}.
%we describe an explicit construction
%in section~\ref{sec:explicit_setsign}.
Each key $x$ is placed into cells $T[h_1(x)]$, $T[h_2(x)], \dots, T[h_k(x)]$,
where again $T$ is the lookup table that represents the IBLT, as follows.  

Each cell $T[i]$ contains an ordered pair 
% $(\textsf{keySum},\textsf{keyhashSum})$.  
%
$$(\textsf{keySum},\textsf{keyhashSum}) \enspace .$$
The \textsf{keySum} field is the XOR of all the keys that have
been mapped to this cell, and hence must be the size of the keys (in bits).
The \textsf{keyhashSum} field is
the XOR of all the hash values $H(x)$ that have been mapped to this
cell, and hence must be the size of the hash $H$ of the keys (in bits).  
% (If there are values associated with keys, an additional \textsf{valueSum} field is introduced, and values are XORed together similarly, but we focus our description on the case where just keys need to be reconciled.) 
Note that insertion and deletion is the same operation, as
a deletion operation reverses an insertion.  Hence it is possible
to delete a key without it first being inserted.

\paragraph{Set reconciliation using IBLTs} 
The above structure yields a set reconciliation algorithm.  Consider two parties
Angel and Buffy (or $A_1$ and $A_2$).  Angel places his keys into an
IBLT, as does Buffy.  They are assumed to share the hash
functions $h_i$ and $H$ according to some prior arrangement.  They transfer their corresponding IBLTs, and
each then deletes their own keys from the transferred IBLTs.  Each
IBLT then contains the set difference, and the set difference is
recovered using the listing process.  Alternatively, since deleting
and inserting both correspond to XOR operations, we can say that the
parties take the {\em sum} of the IBLTs, by which we simply mean that
for each field in each cell, the corresponding values are summed via
the bitwise XOR operation.  As long as the set difference size $d$ is less
than the threshold $t$, recovery will occur with high probability (in
$t$).

%(*** Double-check that there's no reason to have $H$ depend say on the cell number;
%I think you can use the same $H$ for all cells. ***)  
%RP: Should be fine, certainly when assuming full randomness.

As it will help us subsequently, we describe how the listing process functions.
Listing the contents of IBLTs uses a ``peeling process''.  Peeling
here corresponds to finding a cell with exactly one key contained in it, after
the insertion/deletion steps by Angel and Buffy have effectively removed keys
that appear in both sets.  To find a cell with one key, we check 
the \textsf{keySum} field using \textsf{keyhashSum}.  That is, if 
the \textsf{keySum} field contains a value $z$, we check whether
\textsf{keyhashSum} contains the value $H(z)$.  If $z$ is the actual key
contained in the cell, then $H(z)$ will indeed appear in the 
\textsf{keyhashSum} field.  If the \textsf{keySum} field contains the XOR
of several keys, then (under our assumption of random hash values for
$H$) there will be a false positive with probability only $1/q$ where
$q$ is the size of the range of hash values for $H$.  Let us temporarily assume 
that there are no false positives.

Once we have a cell with a single recoverable key $z$, we can remove
$z$ from the structure by computing $h_i(z)$ for all $i$ and deleting
$z$ from the corresponding cells using exclusive-or operations to
update the \textsf{keySum} and \textsf{keyhashSum} fields.  
Removing a key from the structure may yield new keys that can be
recovered.  This peeling process has been used in a variety of
contexts, such as in erasure-corrected codes \cite{LMSS}.  The peeling
process may also fail simply because at some point there may not be
any available cell with only a single recoverable key.  It can be shown
that recovery occurs with high probability, assuming a suitably-sized
IBLT is used \cite{IBLT,LMSS,Molloy}.  Specifically, the process of peeling
corresponds to finding what is known as the 2-core -- the
maximal subgraph where all vertices are adjacent to at least two
hyperedges -- on a hypergraph where cells correspond to vertices and each key 
$x$ corresponds to the hyperedge $\{h_1(x),\dots,h_k(x)\}$.
When $k$ is a constant, the peeling process yields an empty $k$-core with high probability whenever
the table size $m$ satisfies $m > (c_k + \epsilon)t$ for a constant threshold 
coefficient $c_k$ and constant $\epsilon > 0$.  
As noted in \cite{IBLT}, the threshold coefficients, given in Table~\ref{tab:thr}, are close to 1.
(Again, they can be made closer to 1 if desired using irregular hypergraph constructions \cite{IBLT,Rink}.)

\begin{table}[ht]
\begin{center}
\begin{tabular}{c|ccccc}
$k$ & 3 & 4 & 5 & 6 & 7\\\hline
$c_k$ & 1.222 & 1.295 & 1.425  & 1.570  & 1.721 \\
\end{tabular}
\end{center}
\caption{Threshold coefficients for the 2-core rounded to four decimal places.}
\label{tab:thr}
\end{table}

The following theorem, paraphrased from \cite{IBLT}, provides the probabilistic bounds on the failure
probability of the peeling process.
\begin{theorem}
\label{thm:thm1}
As long as we choose $m > (c_k + \epsilon) t$ for some
$\epsilon > 0$, the listing operation (not counting the separate probability of false positives from \textsf{keyhashSum}) fails with
probability $O(t^{-k+2})$.
\end{theorem}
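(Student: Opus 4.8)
The plan is to recast the listing failure as a question about random hypergraphs and then bound it by a first-moment (union bound) argument over the obstructions to peeling. As in the excerpt, I model the structure as a random $k$-uniform hypergraph $\mathcal{H}$ on $m$ vertices (the cells) and $t$ hyperedges (the keys), where each key $x$ contributes the edge $\{h_1(x),\dots,h_k(x)\}$, a uniform $k$-subset of the cells. The peeling process repeatedly deletes a degree-one vertex together with its unique incident edge, so the set $E^{*}$ of surviving edges has the property that every cell it touches is touched at least twice; peeling empties the structure exactly when $E^{*}=\emptyset$, i.e.\ when the $2$-core is empty. Thus the listing operation fails precisely when $E^{*}\neq\emptyset$, and it suffices to bound $\Pr[E^{*}\neq\emptyset]$.

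To do this I would union-bound over \emph{stopping sets}: non-empty connected edge sets $E'$ in which every incident cell meets at least two edges of $E'$. Each connected component of $E^{*}$ is such a set, so the failure probability is at most the expected number of stopping sets. For a stopping set on $s$ edges and $v$ cells the degree condition forces $2v\le ks$, and a direct count gives
$$ \mathbb{E}[W_{s,v}] \;\le\; \binom{t}{s}\binom{m}{v}\left(\frac{\binom{v}{k}}{\binom{m}{k}}\right)^{\!s}. $$
Since $m=\Theta(t)$, for bounded $s$ the $(s,v)$ term has order $t^{\,s+v-ks}=t^{\,s(1-k)+v}$, which subject to $v\le ks/2$ is maximized at $v=ks/2$, giving order $t^{\,s(2-k)/2}$.

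The leading contribution is the smallest stopping set, $s=2$ edges lying on the \emph{same} set of $v=k$ cells — that is, two distinct keys hashing to an identical $k$-subset of cells, the only $s=2$ configuration avoiding a degree-one vertex. Its expected number is $\binom{t}{2}\big/\binom{m}{k}=\Theta(t^{2}/m^{k})=\Theta(t^{\,2-k})$, matching the claimed $O(t^{-k+2})$. Every other bounded configuration is strictly lower order, since for $k\ge 3$ the exponent $s(2-k)/2$ is non-increasing in $s$ (dropping by $(k-2)/2$ per unit increase), so the doubled edge dominates and the remaining small-$s$ terms sum to a geometrically smaller quantity.

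The main obstacle is controlling the tail where $s$, and hence $v$, grows linearly in $t$; there the crude per-term estimate above breaks down (the factor $v^{ks}$ is no longer constant) and must be replaced by a sharper bound. This is exactly the role of the threshold coefficient $c_k$: the hypothesis $m>(c_k+\epsilon)t$ is precisely the regime in which the expected number of \emph{linear-sized} stopping sets is exponentially small, which follows from the standard $2$-core threshold analysis for random $k$-uniform hypergraphs via the generating-function / differential-equation method of \cite{LMSS,Molloy}. Granting that the macroscopic stopping sets are exponentially rare, the full sum $\sum_{s\ge 2}\mathbb{E}[W_{s,v}]$ is dominated by its $s=2$ term, yielding the failure bound $O(t^{-k+2})$.
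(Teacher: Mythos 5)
Your argument is sound and follows the standard route: the paper itself offers no proof of Theorem~\ref{thm:thm1}, importing it from \cite{IBLT}, and the analysis there (and in the 2-core literature it rests on) is exactly your decomposition --- a first-moment union bound over small stopping sets, with the doubled hyperedge on $k$ cells supplying the dominant $\Theta(t^{2-k})$ term, and the threshold analysis of \cite{LMSS,Molloy} disposing of linear-sized cores below $m/t = c_k+\epsilon$. Deferring that macroscopic regime to the cited results is the same move the paper makes, so your reconstruction matches the intended proof in both structure and substance.
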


We now bound the running time for peeling and error probability from false positives
from the \textsf{keyhashSum} field.  Given an IBLT to peel, we can start
by taking a pass over the $O(t)$ cells of the IBLT to find cells where the 
\textsf{keySum} field contains a value $z$ and the \textsf{keyhashSum} contains the value $H(z)$.
We keep a list of such cells and start the peeling with these cells.
As we peel a cell we update the \textsf{keySum}
and \textsf{keyhashSum} fields of other cells.  As we proceed through the list, we might
encounter cells that have already been peeled --- these are simply ignored.  
Also, while peeling
we test cells that we delete keys from to see if now the \textsf{keySum}
and \textsf{keyhashSum} fields match, in which case the cell can be added
to the list of cells to peel.  Overall this process clearly takes $O(t)$ time
(constant time per peeling operation), and there are $O(t)$ times that
we compare \textsf{keySum} and \textsf{keyhashSum} fields within a cell,
each of which can yield a false positive with probability $1/q$.  
Under a worst-case assumption that any such error would cause a
listing failure, this gives a total error probability of at most $O(t/q)$
caused by a false positive in the \textsf{keyhashSum} field.
We can choose $q$ according to our desired error bound.

\paragraph{Adapting to the set difference size}
If we have an upper bound on $d$, we can use this upper bound as the
value $t$ for the IBLT, and apply Theorem~\ref{thm:thm1}.  We
henceforth assume that an upper bound within a (small) constant 
factor of $d$ is available throughout this
work, as finding an upper bound is essentially an orthogonal problem.
Without an upper bound on $d$, some additional work may need to be
done, as explored in for \cite{setdifference,minsky2003set}; we
summarize these results and offer other alternatives as they apply
here.  Both of \cite{setdifference,minsky2003set} suggest approaches that
correspond to repeated doubling; if the IBLT size is not sufficient,
so that decoding is unsuccessful, then start over with larger IBLTs.
Another option is to double the IBLT size by
adding one additional lower-order bit to each hash value. Then it suffices
to send every odd-numbered cell of the IBLT arrays, since the even-numbered
cells can be found by subtraction from the old IBLT. In this way, the total
number of cells transmitted is the same as if the final IBLT had been
transmitted initially.
If the set difference is small but still a constant fraction of the
union of the set sizes, then using min-wise independence or related
techniques \cite{minwise,cohen} to approximate the set difference may
be suitable.  Finally, one might incrementally improve the IBLT.  If
each hash function is assigned its own subarea of cells (as is often
how Bloom filters are implemented to allow parallel lookups into the
structure \cite{BM}, and as noted in \cite{IBLT} is advantageous for
IBLTs), then the parties can incrementally add another hash function
by sending additional information corresponding to the subarea for 
the additional hash function. 

\paragraph{Keeping count}
Finally, the IBLT can also contain an optional \textsf{count} field,
which gives a count for the number of keys in a cell.  We can increase
the count by 1 on insertion, and decrease it by 1 on deletion.  With a
count field, a cell can contain a recoverable key only if the count is
1 or $-1$.  (From Angel's point of view, after deleting his keys from
Buffy's IBLT, when the \textsf{count} is $-1$, it could correspond to
a cell containing a key of his that Buffy does not have.)  Note,
however, a \textsf{count} of 1 or $-1$ does not necessarily correspond
to a cell with a recoverable key.  For example, if Angel has two keys
Buffy does not hold that hash to the same cell, and Buffy has one key
that Angel does not have that hashes to the same cell, then after
taking the difference of the IBLTs the \textsf{count} will be 1 but
there will be as sum of three keys in the cell.  The \textsf{count}
field can be useful for implementation and assists by acting like the
sum of another trivial ``hash'' value for the key (all keys
hash to 1), but it does not replace the \textsf{keyhashSum} field.
%For example, a natural implementation for listing IBLT keys in $O(m)$
%time is to use a linked-list-based priority queue of cells in $\cal
%B$ indexed by the absolute value of their \textsf{count} fields and modifying the deletion
%method to update this queue each time it deletes an entry from $\cal
%B$.

\paragraph{Abstraction}
At an abstract level, we can view the 2-party setting as follows.  
We desire a linear sketch (over an appropriate field) taken on sets 
of keys with the following properties.  Let $f(S)$ denote the sketch
of the set $S$.  We desire
\begin{itemize}
\item $f((S_1 \cup S_2) - (S_1 \cap S_2)) = f(S_1) \oplus f(S_2)$;
\item  a set $X$ can be efficiently extracted from 
$f(X)$ under suitable conditions, which here means that $X$ is sufficiently
small;
\item the size of the sketch is small, which here means that that if we want to 
recover $(S_1 \cup S_2) - (S_1 \cap S_2)$ then the sketch size is $O(|(S_1 \cup S_2) - (S_1 \cap S_2)|)$.
\end{itemize}
We have focused our description on IBLTs as it is a sketch with the required properties.
For multi-party reconciliation, we now show that IBLTs can be extended
by working over an appropriate field to ensure that with multiple sets $S_1,S_2,\ldots,S_n$,
we have $f(\cup_i S_i - \cap_i S_i) = \sum_i f(S)$, while still maintaining suitable extraction
and size properties.  

\section{The 3-Party Setting}

We now describe the extension of IBLTs to provide set reconciliation for three parties.
Starting with the 3-party setting allows us to demonstrate the key ideas behind this
approach and explore its capabilities;  we then examine how these extensions can be 
used beyond three parties.  

To start each of our three parties -- Angel, Buffy, and Cordelia (or $A_1$,
$A_2$, and $A_3$) -- inserts keys and hashes of keys
into the IBLT.  (For now, we will not use a \textsf{count} field.)  However, in
this setting both the keys and the hashes of keys are mapped injectively to
values in $(\mathbb{F}_3)^b$ for an appropriate $b$.  
The particular way of mapping to $(\mathbb{F}_3)^b$ does not matter --- we could interpret the key and hash values as number base 3 (at the cost of converting to base 3),
or (at the cost of some space) we could interpret the vector of bits of the key or hash value as a vector in $(\mathbb{F}_3)^b$.
Now, instead of using XOR in our insertion
operation -- which is equivalent to treating keys as $b$-bit 
elements of $(\mathbb{F}_2)^b$ -- we move to $(\mathbb{F}_3)^b$ (treating keys
as sequences of $b$ trits;  similarly, hash values are sequences of trits, perhaps of
a different length).  The three IBLT data structures are combined by
summing each of the \textsf{keySum} fields for each cell, as well as
summing the \textsf{keyhashSum} fields for each cell, where the sums
are sums of elements in $(\mathbb{F}_3)^b$.  To begin, let us ignore the issues of the
underlying network and assume that all parties obtain all 3 IBLTs.

Any key that appears in all three sets is canceled out
of \textsf{keySum} by the summation, and similarly the summation of
the three hashes of the key is canceled out of
\textsf{keyhashSum}.  Hence the number of keys existing in the IBLT
after this cancellation is $|\cup_i S_i - \cap_i S_i|$.  
If a key $x$ is found in the \textsf{keySum}
field and a matching $H(x)$ is in the \textsf{keyhashSum} field,
the key is recovered and removed by subtracting $x$ and $H(x)$
(or equivalently, adding in $2x$ and $2H(x)$ in the appropriate fields).
However, some
of these keys may appear duplicated in the IBLT; for example, if Angel
and Buffy have a key $x$ but Cordelia does not, then the sum of the 3
IBLTs may have a cell containing the value $2x$ in the \textsf{keySum}
field and $2H(x)$ in the \textsf{keyhashSum} field.  We therefore must
further modify our method of recovery.  If we see a value $z$ in the \textsf{keySum} field,
we must check whether the value $H(z)$ appears in the \textsf{keyhashSum}
field, but we must also check whether $2H(z/2)$ appears in the \textsf{keyhashSum}
field, in which case we treat it is a verification of $z/2$ as the key.  
This increases our error rate due to false positives from
\textsf{keyhashSum} by at most a factor of two, which is still $O(t/q)$.  
We note that we could reduce this error rate by instead keeping a count field,
which would tell us which one of the two cases above may apply.
Also, we note that when removing this key from the IBLT, each cell it is hashed
to would then have to remove two copies of the key.  

\begin{figure*}[t]
\begin{minipage}[t]{0.2\linewidth}
\centering
\includegraphics[scale=0.3]{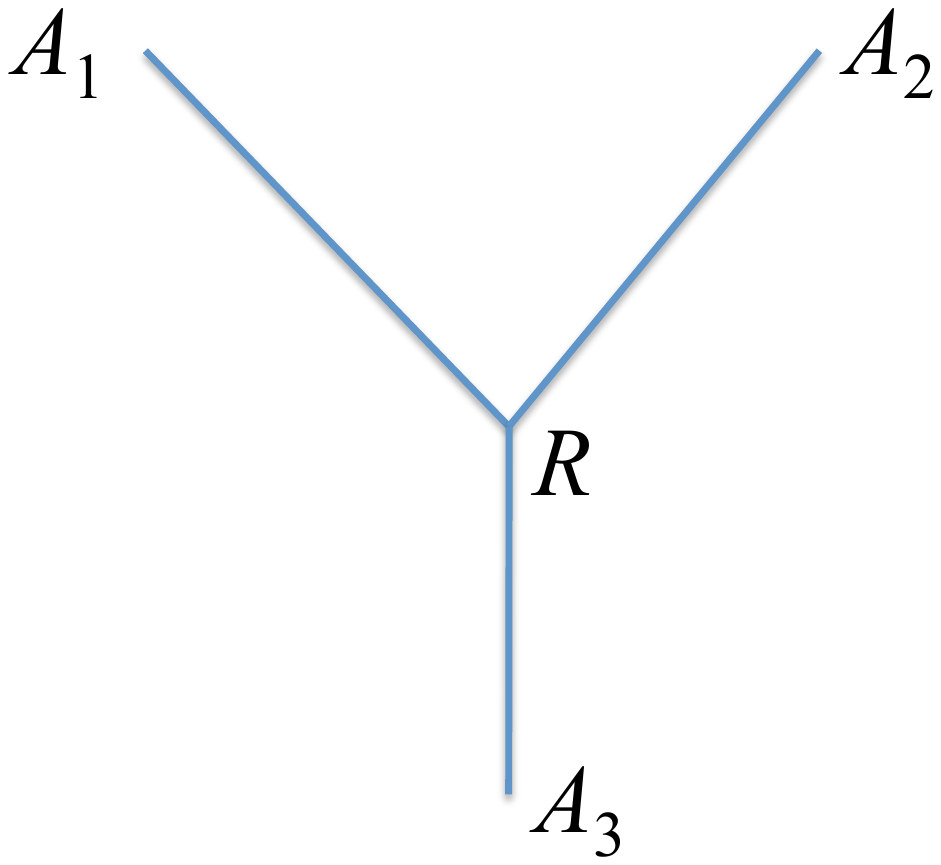}
\end{minipage}
\hspace{0.15cm}
\begin{minipage}[t]{0.2\linewidth}
\centering
\includegraphics[scale=0.3]{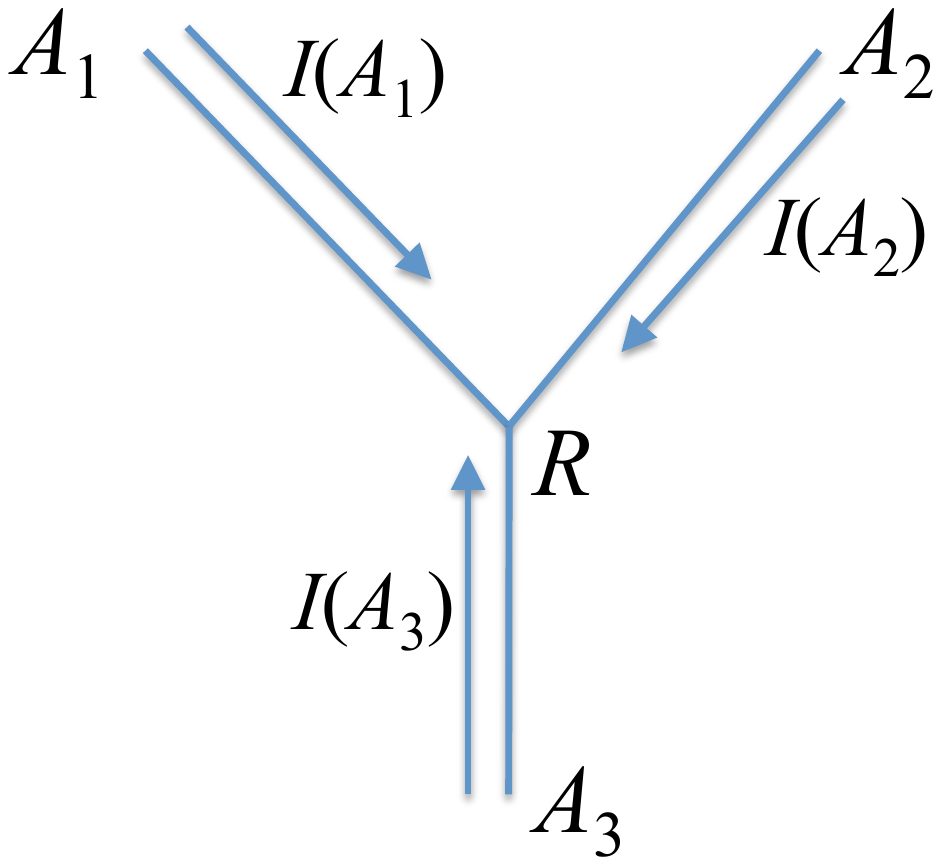}
\end{minipage}
\hspace{0.15cm}
\begin{minipage}[t]{0.2\linewidth}
\centering
\includegraphics[scale=0.3]{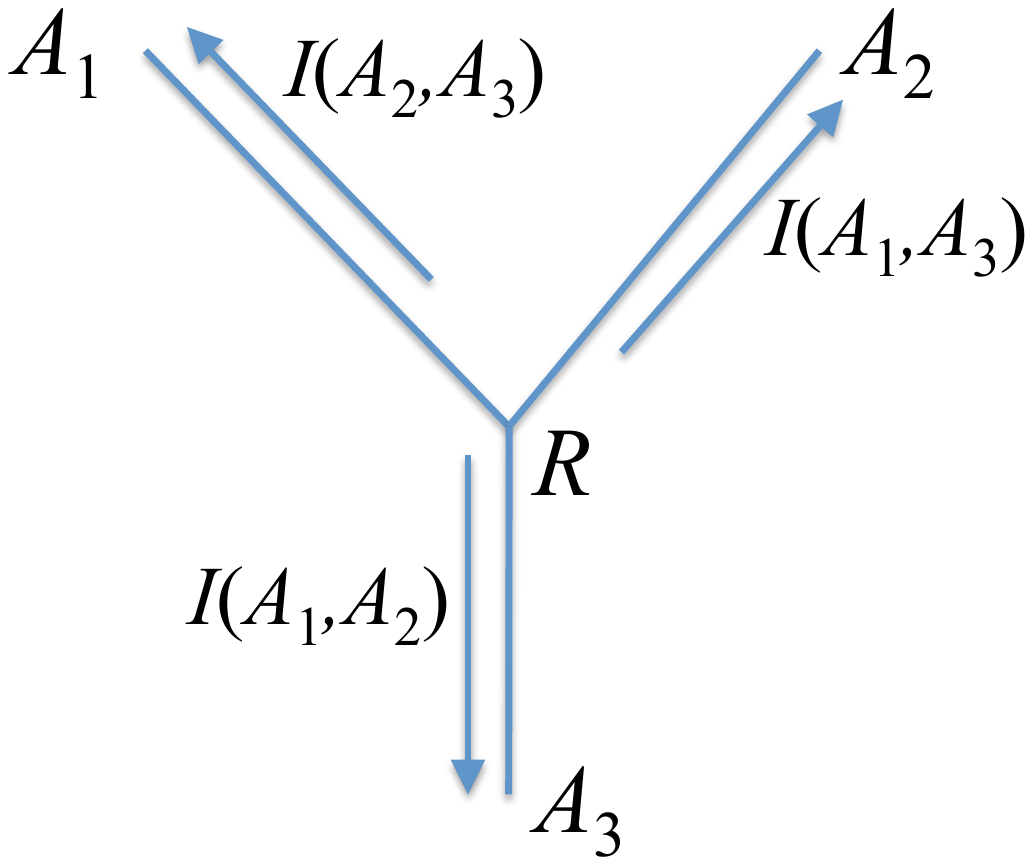}
\end{minipage}
\hspace{0.15cm}
\begin{minipage}[t]{0.2\linewidth}
\centering
\includegraphics[scale=0.3]{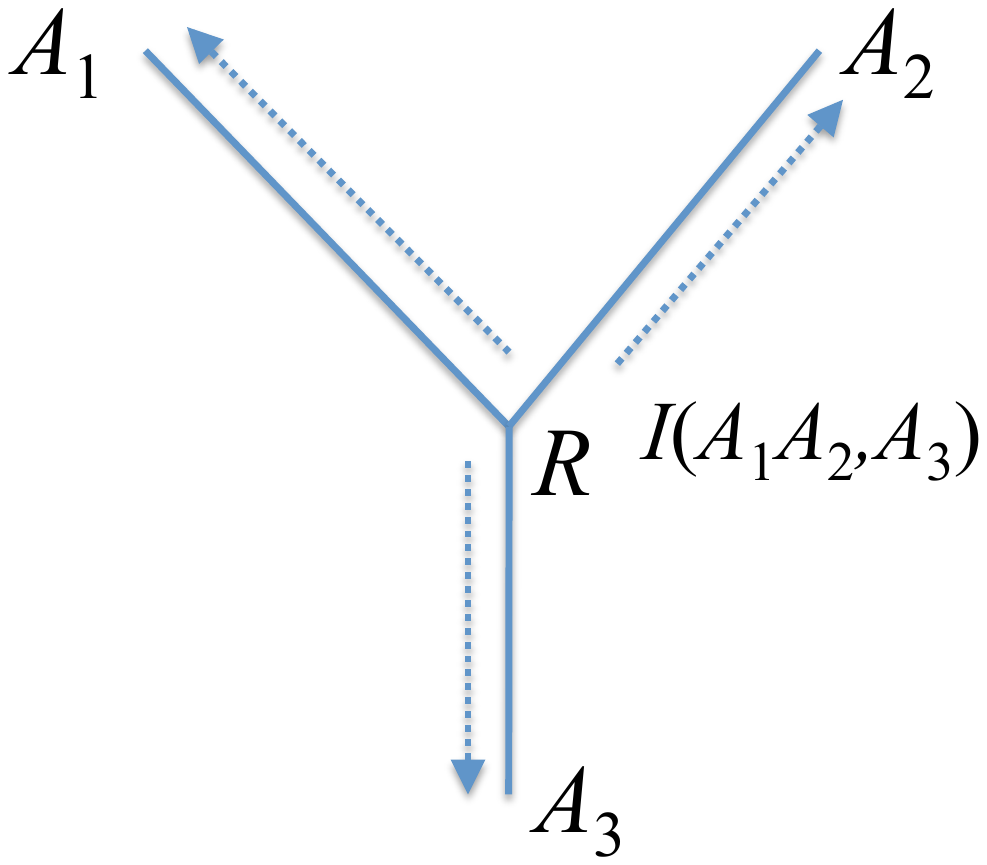}
\end{minipage}
\caption{Set reconciliation with network coding. The relay $R$ receives an IBLT from each party, and sends back to each party the sum of the other parties' IBLTs. Each party just needs to send and receive one message, improving on pairwise set reconciliation protocols.  Alternatively, in the subfigure on the right, in a wireless setting the relay $R$ can broadcast the sum of all the IBLTs to all parties (denoted by dotted lines), reducing the number of messages further.}
\label{fig:netc}
\end{figure*}

We emphasize that despite this difference, the IBLT listing process works in the same 
manner, and in particular has the same threshold size for successful listing.  This
is because a key is recovered exactly when a key is the only key hashed to a cell;  the
multiplicity of that key within the cell does not affect the listing process.
As the IBLT recovery works in the same manner as in the 2-party case,
we have the following theorem, based on Theorem~\ref{thm:thm1}.  
\begin{theorem}
\label{thm:thm2}
Consider a 3-party reconciliation using IBLTs with $k$ hash functions and a range
of $q$ values in the \textsf{keyhashSum} field.  
As long as we choose $m > (c_k + \epsilon) t$ for some
$\epsilon > 0$, and $t \geq |\cup_i S_i - \cap_i S_i|$, the 3-way reconciliation protocol fails with 
probability $O(t^{-k+2} + t/q)$.
\end{theorem}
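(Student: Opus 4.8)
The plan is to reduce the 3-party failure analysis directly to Theorem~\ref{thm:thm1} by arguing that the combinatorial object governing the peeling process is unchanged, and then to account separately for the false positives arising from the modified recovery test. First I would observe that after summing the three IBLTs over $\mathbb{F}_3$, a key $x$ survives with a nonzero coefficient in \textsf{keySum} exactly when it belongs to one or two of the sets $S_1, S_2, S_3$: a key in all three (or none) contributes coefficient $3 \equiv 0$ and cancels, while a key in one set contributes coefficient $1$ and a key in two sets contributes coefficient $2$. Hence the set of surviving keys is precisely $\cup_i S_i - \cap_i S_i$, whose size is at most $t$ by hypothesis.

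The central observation is that the peeling process depends only on the hypergraph in which each surviving key $x$ is the hyperedge $\{h_1(x),\dots,h_k(x)\}$ and the $m$ cells are the vertices. This hypergraph is determined solely by the hash functions $h_i$ applied to the at most $t$ surviving keys, and is therefore distributed identically to the hypergraph arising in the 2-party setting with the same number of keys; the $\mathbb{F}_3$-coefficients $1$ or $2$ attached to each key play no role in whether a cell is peelable, since a cell is recoverable exactly when a single hyperedge is incident to it. Consequently Theorem~\ref{thm:thm1} applies verbatim, and for $m > (c_k + \epsilon)t$ the peeling process fails to empty the $2$-core with probability $O(t^{-k+2})$.

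It remains to bound the contribution of false positives from the \textsf{keyhashSum} field. Because a single surviving key appears with coefficient $1$ or $2$, the recovery test must accept a \textsf{keySum} value $z$ when either $H(z)$ or $2H(z/2)$ matches \textsf{keyhashSum} (using that $2$ is invertible in $\mathbb{F}_3$, with $z/2 = 2z$, so a coefficient-$2$ key is cleanly recovered). A cell holding a sum of several keys can thus be mistaken for a singleton by matching either of two target hash values, so the per-comparison false-positive probability is at most $2/q$ rather than $1/q$. As in the 2-party analysis there are $O(t)$ such comparisons over the course of peeling, so by a union bound the total false-positive probability is at most $O(t/q)$, the factor of two being absorbed into the constant.

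Combining the two failure modes by a union bound yields an overall failure probability of $O(t^{-k+2} + t/q)$. The main obstacle I would expect is not any single estimate but the need to argue carefully that the multiplicity of a key within a cell is genuinely irrelevant to peelability while the recovery test nonetheless remains sound for both coefficient values; once invertibility of $2$ in $\mathbb{F}_3$ is invoked, both points reduce to the already-established 2-party analysis.
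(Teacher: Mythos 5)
Your proposal is correct and follows essentially the same route as the paper: cancellation of keys in all three sets over $\mathbb{F}_3$, the observation that peelability depends only on the hypergraph of surviving keys and not on their coefficients (so Theorem~\ref{thm:thm1} applies unchanged), and the doubling of the \textsf{keyhashSum} false-positive rate from testing both $H(z)$ and $2H(z/2)$, absorbed into the $O(t/q)$ term. The only detail the paper adds that you leave implicit is that peeling a coefficient-$2$ key requires subtracting two copies of it from each of its cells, which your appeal to invertibility of $2$ in $\mathbb{F}_3$ covers.
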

The 3-party protocol uses $m$ cells, which contain the key and a hash.
As long as the keys are sufficiently large so that the hash is relatively small
compared to the keys, the constant factor from overhead will be small.  For
64-bit keys and 32-bit hashes, for example, the total overhead should be less
than a factor of 2 for $k =3,4$, and less than a factor of 3 for $k$ up to 7.    
In many settings, keys or associated stored values can be significantly longer than 64
bits and the overhead will be much smaller.  

\subsection{Useful Extensions}
\label{sec:usefulext}

Note that from this process each party can determine the number of other
parties (1 or 2) that hold a key they do not possess.  It is not hard
to add some additional information so that each party can determine 
which other party holds the key.  For example, we could add a 3-bit 
\textsf{IDs} field to each cell;  each time $A_i$ adds (or removes) a
key from the IBLT it would toggle the $i$th bit.  Hence the $i$th
bit of the \textsf{IDs} field would record the parity of the number of
keys that $A_i$ has added to the cell.  Recall that when we recover a key
from the cell, it should be the only key in that cell;  hence, the bits
set to 1 in the \textsf{IDs} field correspond to the parties that hold
that key.  (We note that, in fact, having a modulo 3 counter, and having
$A_i$ add $i$ to that counter when adding a key, would in fact suffice
in this case;  the details are left to the reader.  Our description here generalizes more readily.)  

Another reasonable question one might ask is if one party drops out of the
protocol, can the remaining two parties still reconcile their sets.
The answer is yes.  Suppose Cordelia does not participate, so that
only Angel and Buffy swap IBLTs.  In this case, when combining IBLTs,
Angel adds his own IBLT twice (or simply multiples every entry in his
IBLT by 2 initially), and similarly for Buffy.  That is, each participating party
can simply act as though they were two parties with the same set.
This guarantees that any key shared by both parties appears 3 times in
the IBLT and is canceled; the IBLT listing can be done as above.
Note that a participating party must know the number of other participating parties, 
potentially requiring dropping parties to signal their dropping out in some way.  

\subsection{Combining with Network Coding}

We have thus far assumed that all participating parties get all IBLTs,
and hence it may not be clear that this approach is significantly
advantageous when compared to simply performing pairwise
reconciliations.  However, significant advantages become clearer when
we consider the transmission of IBLTs over a network.  Because IBLTs
are linear sketches, based solely on addition, linear network coding
methods can be applied.

Specifically, suppose $A_1$, $A_2$, and $A_3$ are communicating over a
network via a relay $R$.   We emphasize that this is a simple setting 
for illustrative purposes.

In Figure~\ref{fig:netc} we let $I(A_i)$
represent the IBLT for $A_i$, and similarly let $I(A_i,A_j)$ be the
sum of the IBLTs for $A_i$ and $A_j$, and so on.  If we sent $I(A_1)$ to $A_2$
and $A_3$, even if the relay duplicates the IBLT it will have to cross
3 links, and similarly for the other two IBLTs.  Hence, the total
transmission cost will be 9 IBLTs worth of data.  However, suppose as
shown in Figure~\ref{fig:netc} that all the Bloom filters $I(A_1),I(A_2)$, and
$I(A_3)$ are sent to the relay $R$, and $R$ then takes sums to send
$I(A_2,A_3)$ to $A_1$, and similarly for the other parties.  Now only
6 IBLTs worth of data need to be sent, saving 1/3 of the transmission
cost.  This savings is entirely similar to standard network coding 
techniques.

Indeed, in the wireless setting, we could instead have the relay $R$
broadcast the joint IBLT $I(A_1,A_2,A_3)$ of all parties to all the
parties, reducing the number of messages down to four.  This approach
is similar to the now well-known approach of using simple XOR-based
network coding in wireless networks \cite{xors}.

We note that, in this relay setting, instead of using an IBLT with
entries over $(\mathbb{F}_3)^b$, we could build up a joint IBLT using
standard IBLTs by having the relay do more work.  Given IBLTs $I(A_1)$
and $I(A_2)$, the relay could determine the
set difference from the pair of IBLTs, and correspondingly add elements to
one IBLT to create an IBLT for the union of the sets.  Then on the arrival of $I(A_3)$
the relay could again determine the set difference between $A_1 \cup A_2$
and $A_3$ and use this to build an IBLT for $A_1 \cup A_2 \cup A_3$.  This
repeated decoding and encoding approach is used in \cite{BoralM}.  However, this approach requires much
more work from the relay, namely a full decoding for each newly received
IBLT, which should be avoided in many settings.  Our work shows, for the first
time, that such additional work can naturally be avoided.  

\section{Generalizing to $n$ Parties}

We now consider the generalization to $n$ parties.   We use a field of characteristic $p$, where $p \geq
n$, and we assume keys are mapped injectively (in an arbitrary way) into $(\mathbb{F}_p)^b$, for some $b$.
For convenience we simply take $p$ to be a prime here, so that
keys are mapped to vectors of non-negative integers smaller than $p$, and sums are computed modulo $p$.  
(And similarly for hash values, for a possibly different $b$).  
For simplicity, the reader might think about $b=1$, so keys and hash values are mapped
injectively to integers modulo $p$.  
For convenience we will assume multiplication and division modulo $p$ can be done in constant time;
those who object to this assumption may add an appropriate $O(\log^2 p)$ factor to the time bounds
(although better bounds may be possible with specially chosen primes).
%*** Mersenne?
Let $v_i$ denote the representation of $I(A_i)$ as
a vector as above.

% That is, keys and hash values are mapped injectively (in an arbitrary way)
% to the vectors of integers smaller than $p$, and sums are computed modulo $p$.  
% We show the following:

For efficiency and to reduce the probability
of a false positive when using the
\textsf{keyhashSum} to verify the key value in the cell, we keep a 
counter modulo $p$ in the \textsf{count} field to track the count of
the number of (copies of) keys hashed to a cell by all parties

We first state a general result that may not appear directly relevant at first blush.
However, this form is useful in that it can be applied to more specific situations,
including not only the straightforward generalization to $n$ parties, where we consider
a sum of $n$ IBLTs, but also situations (motivated by randomized network coding) in which
we are given two different linear combinations of IBLTs, and need to do set reconciliation.
The condition under which set reconciliation succeeds is somewhat technical, and it is possible
that some set $X$ of keys cannot be recovered. But as we will see, for the linear combinations of interest (such as random linear combinations) the probability that $X\ne\emptyset$ can be made small;  in some settings (e.g., when a sum of IBLTs is obtained as in from a relay) the probability will be 0.  

\begin{theorem}
\label{thm:mostgen}
Consider an $n$-party reconciliation using IBLTs with $k$ hash functions and a range
of $q$ values in the \textsf{keyhashSum} field.  
Suppose we know two linear combinations (over $(\mathbb{F}_p)^b$)
$L_1 = \sum_{i=1}^n \alpha_i v_i$ and $L_2 = \sum_{i=1}^n \beta_i v_i$, as well as the sums of coefficients
$\alpha = \sum_i \alpha_i \bmod p$ and $\beta = \sum_i \beta_i \bmod p$, where $\alpha_i \neq 0 \bmod p$ for all $i$,
and $\beta \neq 0 \bmod p$.
Let $X= \{ x\in \cup_i S_i \; | \; \sum_i (\alpha_i + \gamma \beta_i) 1_{x\in S_i} \bmod p = 0\}$, where $\gamma = -\alpha \beta^{-1} \bmod p$, and $1_E$ is an indicator random variable for the event $E$.
% be the subset of keys $x$ of $\cup_i S_i$ with the following property:  $x$ appears exactly in
%sets $S_j$ for $j \in T$, where $T$ has the property that $\sum_{i \in T} (\alpha_i + \gamma \beta_i) = 0 \bmod p$.
As long as we choose $m > (c_k + \epsilon) t$ for some
$\epsilon > 0$, and $t \geq |\cup_i S_i - \cap_i S_i|$, then from $L_1$ and $L_2$ we can determine 
all keys from $(\cup_i S_i - \cap_i S_i) - X$, 
with probability $1-O(t^{-k+2} + t/q)$.
\end{theorem}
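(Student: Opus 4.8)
The plan is to reduce the two-combination problem to a single ordinary IBLT listing by forming one carefully chosen linear combination. Since $p$ is prime and $\beta \neq 0 \bmod p$, the inverse $\beta^{-1}$ exists and $\gamma = -\alpha\beta^{-1} \bmod p$ is well defined and computable by the decoder from the known sums $\alpha,\beta$. First I would form the single sketch $L = L_1 + \gamma L_2 = \sum_{i=1}^n (\alpha_i + \gamma\beta_i)\,v_i$. Because the hash functions $h_1,\dots,h_k$ are shared, every party's copy of a key $x$ lands in the same $k$ cells, and each $v_i$ is an IBLT over $(\mathbb{F}_p)^b$ in which $x$, its hash $H(x)$, and one unit of the \textsf{count} field are present exactly when $x\in S_i$. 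Hence the net coefficient with which $x$ appears in all three fields of $L$ is $c_x = \sum_i (\alpha_i + \gamma\beta_i)\,1_{x\in S_i}\bmod p$, which is precisely the quantity appearing in the definition of $X$.

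The algebraic heart of the argument is that $\gamma$ is chosen exactly to annihilate the intersection. For $x\in\cap_i S_i$ we have $1_{x\in S_i}=1$ for all $i$, so $c_x = \sum_i(\alpha_i+\gamma\beta_i) = \alpha+\gamma\beta = \alpha-\alpha = 0$. Thus every intersection key cancels out of $L$, and by definition so does every key of $X$; conversely a key survives in $L$ with nonzero coefficient $c_x$ iff it lies in $\cup_i S_i$ with $c_x\neq 0$, that is, iff it lies in $(\cup_i S_i - \cap_i S_i) - X$ (using $\cap_i S_i\subseteq X$, which the computation above establishes). So $L$ is a single IBLT whose effective contents are exactly the target set, of size at most $|\cup_i S_i - \cap_i S_i| \le t$. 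I note that the reduction needs only $\beta\neq 0$ to define $\gamma$; the nonvanishing of the $\alpha_i$ is what will matter later for controlling $|X|$ under random coefficients, not for the recovery claim itself.

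Next I would run the peeling process on $L$, modified to handle non-binary multiplicities. In a cell holding a single surviving key $x$, the three fields read $(c_x x,\ c_x H(x),\ c_x)$ with $c_x\neq 0$, so the decoder reads $c=c_x$ from the \textsf{count} field, forms the candidate $z = c^{-1}\cdot\textsf{keySum}$, and accepts it iff $\textsf{keyhashSum}=c\cdot H(z)$; peeling then subtracts $c_x$ copies of $x$ and of $H(x)$ and $c_x$ from the \textsf{count} of its other cells. A genuine single-key cell is always recovered, since its count is nonzero and inversion returns $z=x$. The crucial point is that the \emph{combinatorial} peeling dynamics are untouched: the surviving set is a deterministic function of the $S_i$ and the coefficients and is therefore independent of the random hashes $h_1,\dots,h_k$, so $L$ realizes a random $k$-uniform hypergraph on $m$ cells with at most $t$ hyperedges, and Theorem~\ref{thm:thm1} applies verbatim to give a peeling failure probability of $O(t^{-k+2})$. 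The only new error source is a cell holding several surviving keys with nonzero count: there the inverted candidate is spurious and passes the \textsf{keyhashSum} test only with probability $1/q$, contributing $O(t/q)$ over the $O(t)$ checks. A union bound then yields the stated $1-O(t^{-k+2}+t/q)$.

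The main obstacle is not a single hard estimate but two intertwined verifications. The first is the algebraic one: confirming that the $\gamma$-combination cancels \emph{exactly} $\cap_i S_i\cup X$, so that no surviving key is accidentally killed and the count of surviving keys stays below the threshold $t$. The second is ensuring that recovering the multiplicity via \textsf{count} and then dividing does not disturb the $2$-core analysis, i.e. that single-key cells are still precisely the peelable cells and that the field coefficients affect only stored values rather than which hyperedges are present. Once these are in place, both the threshold bound and the false-positive accounting are inherited directly from the two-party analysis, so no fresh concentration argument is required; the real work lies entirely in the reduction.
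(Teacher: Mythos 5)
Your proposal is correct and follows essentially the same route as the paper's own proof: form the single combination $L_1 + \gamma L_2$ so that intersection keys (and exactly the keys of $X$) cancel, then recover the survivors with the standard peeling process, using the \textsf{count} field to divide out the nonzero coefficient and limit the \textsf{keyhashSum} false-positive contribution to $O(t/q)$. Your additional observation that the coefficients affect only stored values and not the hypergraph structure, so Theorem~\ref{thm:thm1} applies verbatim, is a point the paper leaves implicit but is entirely consistent with its argument.
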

\begin{proof}
If a key is present in all sets $S_i$ then it appears with coefficient $\alpha$ in $L_1$, and similarly it will appear
with coefficient $\beta$ in $L_2$.  From these we can form the combination of $L_1 + \gamma L_2$, where as stated $\gamma = -\alpha \beta^{-1} \bmod p$.  The coefficient of a key that is present in all sets $S_i$ is then 0 modulo $p$ and therefore the key is not present in the IBLT $L_1 + \gamma L_2$.  Unfortunately, the same is true for any key that appears exactly in sets $S_j$ for $j \in T$ when $T$ has the property that $\sum_{i \in T} (\alpha_i + \gamma \beta_i) = 0 \bmod p$.  All other keys can be found with the given probability using the IBLT recovery process.  Note we assume the use of a \textsf{count} field that tracks the weighted multiplicity of the number of keys hashed to a cell modulo $p$ (that is, the sum of the coefficients of the keys hashed to that cell), so that only a single possible key value must be tested in a cell at any time.  It is this use of the \textsf{count} field that limits the additional failure probability to $O(t/q)$.    

We also remark that in the case where $\alpha = 0$,
the theorem also holds, but in fact there is no need for the second
linear combination $L_2$ (as $\gamma = 0$).  In this case, a key present in all sets has
a multiplicity that is $0 \bmod p$, and $X$ corresponds those keys
$x$ that appear exactly in sets $S_j$ for $j \in T$ where $T$ has the
property that $\sum_{i \in T} \alpha_i = 0 \bmod p$.  
{\hspace*{\fill}\rule{6pt}{6pt}\smallskip}
\end{proof}

To see how Theorem~\ref{thm:mostgen} can be used, we first consider
the basic case, where each party simply wants to find $\cup_i S_i
- \cap_i S_i$.  We first provide the argument without using the theorem,
and then see how the theorem immediately implies the result.  
Let us assume that each party $A_i$ obtains the sum of all IBLTs $Z = \sum_i v_i$, and
of course each party $A_i$ also has $S_i$ and hence $v_i$.  
If $p > n$, then after $A_i$ obtains the sum $Z$ of all IBLTs, it computes the
sum $Z+(p-n)v_i$, essentially acting as $p-n+1$ parties with the same set. 
As before, this means that the contributions of keys appearing in all sets will
cancel out.  Each key not appearing in all sets will have an associated multiplicity
of less than $n$ in $Z$ and hence less than $p$ in $Z+(p-n)v_i$ (regardless of whether
the key is in $S_i$ or not). The algorithm can then for each cell examine the \textsf{count}
$a$, the \textsf{keySum}~$y$, and the \textsf{keyhashSum} $z$ to determine if $H(a^{-1}y)=a^{-1} z \bmod p$, which is true when the count $a$ corresponds to a single key.
Hence, as before, each key not in all sets can be recovered using the IBLT recovery process
with probability $1-O(t^{-k+2} + t/q)$.
(For small $n$ one might choose not to use a \textsf{count} field;
one could avoid keeping the \textsf{count} field and test all possible \textsf{count} values,
that is try all values of $a$ from 1 to $p-1$. Or one can be
slightly smarter;  if $A_i$ contains a single $x \in S_i$ that hashes to that cell,
then $A_i$ need only test the value of 
$a$ that satisfies $ax = y$, and 
if $A_i$ has $x \in S_i$ that hashes to that cell, then only $a$ values from
$1$ to $n-1$ need to be tested.)  

Alternatively, we see that this matches the setting of
Theorem~\ref{thm:mostgen}, with $L_1 = Z$,  $L_2 = v_i$, and $\gamma = -n$.  
Computing $Z-nv_i = Z+(p-n)v_i \bmod p$ we see that the set $X$ must be empty,
because as we argued above any key not in all sets has a count that is non-zero 
modulo $p$ in $Z+(p-n)v_i$.  The result follows.  

\begin{figure*}[t]
\begin{minipage}[t]{0.95\linewidth}
\centering
\includegraphics[scale=0.6]{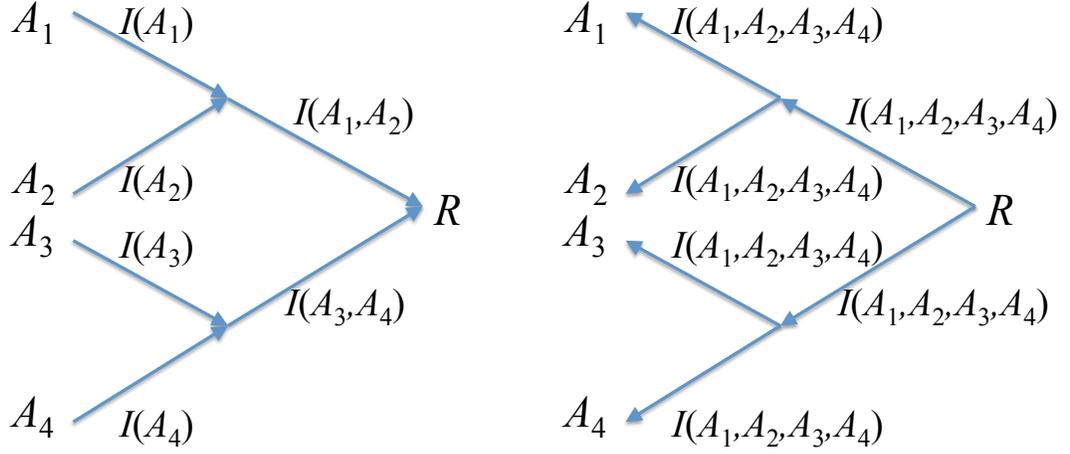}
\end{minipage}
\caption{Set reconciliation example when the parties are connected by a communication tree, a common case in network
communications.}
\label{fig:tree}
\end{figure*}

\subsection{Extensions}

Several of our extensions from Section~\ref{sec:usefulext} hold in this framework as well.
For example, with an $n$-bit 
\textsf{IDs} field one can track the parity of the number of keys
in a cell for each of the $n$ parties, and thereby determine the parties that hold a recovered key.
Also, if some parties do not
participate, each participating party can simply add in additional
copies of its own IBLT to arrange for cancellation if all
participating parties hold a key.  All that needs to be known for
recovery is the number of participating parties.

In the case of $n$ parties connected through a relay node, we find
that by passing IBLTs through the relay, we can arrange for $n$-party
set reconciliation using $2n$ messages on a wired network and $n+1$
messages on a wireless network, where the relay broadcasts the sum of
the IBLTs to all parties.  This improves over the simplistic natural approach
of using $n(n-1)$ point-to-point messages to compute all pairwise differences.
However, it should be noted, the pairwise difference messages may in fact
be smaller in size, since pairwise set differences may be smaller than the total
set difference.

%\section{Network Coding Applications with IBLTs}
\section{Network Coding and IBLTs}

At a high level, our work thus far suggests 
that, by working over a
suitable finite field, IBLTs can be naturally plugged in to linear
network coding schemes to provide efficient set reconciliation
mechanisms, where the message size corresponds (up to constant
factors) to the generalized set difference.  We believe this correspondence
is indeed quite general, and while the broad nature of use and applications
of network coding make it difficult to turn this statement into a theorem,
we provide some sample applications.  

\subsection{Set Reconciliation on Trees}

We first consider set reconciliation among $n$ parties, connected by a
communication network that is a {\em rooted tree} with $E$ edges,
known in advance.  The parties are at the leaves of the tree.  At each
time step, each node in the tree can send a message to each of its
neighbors.  Let $P$ be the length of the longest path from a leaf to a root of the tree.  
We assume in what
follows that keys are sufficiently large so that other overhead (e.g.,
the \textsf{keyhashSum} field) at most affects the total size by a
constant factor, and that an upper bound on $|\cup_i S_i - \cap_i S_i|$
within a constant factor is known.  We claim the following.
\begin{theorem}
Set reconciliation among $n$ parties on a rooted tree can be performed 
in $2P$ time steps with $2E$ total messages of size $O(|\cup_i S_i
- \cap_i S_i|)$ with probability $1-O(t^{-k+2} + nt/q)$.  Each non-leaf node in the network requires
only $O(|\cup_i S_i - \cap_i S_i|)$ storage.  
\end{theorem}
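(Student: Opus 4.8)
The plan is to run a two-phase protocol that exploits the linearity of IBLTs over $(\mathbb{F}_p)^b$, as in the relay settings above, but routed along the tree. In the first (\emph{converge-cast}) phase, each leaf $A_i$ sends its sketch $v_i = I(A_i)$ to its parent; each internal node waits until it has received the sketches from all of its children, adds them over $(\mathbb{F}_p)^b$ into a single running sum, and forwards that sum to its own parent. Because field addition is associative and commutative, the sketch accumulated at the root is exactly $Z = \sum_i v_i$, independent of the shape of the tree or the order of arrivals. In the second (\emph{broadcast}) phase, the root sends $Z$ down to its children, and each internal node forwards the received $Z$ to all of its children, until every leaf holds $Z$. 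I would first settle the routing accounting: each phase propagates information one level per time step and so finishes in at most $P$ steps, for a total of $2P$; each phase sends exactly one sketch across each edge, for a total of $2E$ messages; and every message is a single IBLT, sized by taking $t$ to be the assumed upper bound on $|\cup_i S_i - \cap_i S_i|$ with $m = O(t)$, giving message size $O(|\cup_i S_i - \cap_i S_i|)$.

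Once each party holds $Z$, I would reduce correctness directly to the $n$-party analysis established after Theorem~\ref{thm:mostgen}: party $A_i$ forms $Z + (p-n) v_i = Z - n v_i \bmod p$, which (since $p > n$) assigns coefficient $0$ to exactly the keys of $\cap_i S_i$ and a nonzero coefficient to every key of $\cup_i S_i - \cap_i S_i$, whether or not that key lies in $S_i$. This is the instance of Theorem~\ref{thm:mostgen} with $L_1 = Z$, $L_2 = v_i$, and $\gamma = -n$, for which the exceptional set $X$ is empty, so $A_i$ recovers all of $\cup_i S_i - \cap_i S_i$ by the usual peeling-and-verification process. The storage bound for internal nodes is then immediate: such a node never stores keys, only the running field-sum of the IBLTs it has seen, a single sketch of $O(t) = O(|\cup_i S_i - \cap_i S_i|)$ cells.

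The step I expect to be the main obstacle is justifying the \emph{asymmetry} in the claimed failure probability $1 - O(t^{-k+2} + nt/q)$: why the peeling term is not multiplied by $n$ while the false-positive term is. The observation I would develop is that the peeling outcome depends only on the hypergraph whose hyperedges are the keys of $\cup_i S_i - \cap_i S_i$ mapped through the shared hash functions $h_1,\dots,h_k$; this hypergraph is identical for every party (the surviving key set and the hashes are the same), so peeling succeeds or fails as a \emph{single} shared event, contributing the $O(t^{-k+2})$ term of Theorem~\ref{thm:thm1} exactly once. By contrast, the \textsf{keyhashSum} verifications are performed on the party-specific sketch $Z - n v_i$, whose cell contents differ across parties, so I would treat the $O(t)$ checks of each party as separate events and union-bound over the $n$ parties, yielding $O(nt/q)$. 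Combining the single peeling event with the union-bounded false-positive events gives the stated bound; the remaining details (the \textsf{count}-field bookkeeping that leaves a single candidate key per cell, and the per-check $1/q$ false-positive probability) are exactly as in the proof of Theorem~\ref{thm:mostgen}.
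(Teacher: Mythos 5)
Your proposal is correct and follows essentially the same route as the paper: converge-cast of IBLT sums to the root, broadcast of $Z=\sum_i v_i$ back down, reduction to Theorem~\ref{thm:mostgen} with $L_1=Z$, $L_2=v_i$ (so $X=\emptyset$), and the same explanation of the asymmetric failure bound --- peeling is a single event shared by all parties since every party peels the same key set with the same hash functions, while the \textsf{keyhashSum} false positives must be union-bounded over the $n$ party-specific sketches $Z+(p-n)v_i$. The accounting of $2P$ steps, $2E$ messages, and $O(|\cup_i S_i - \cap_i S_i|)$ storage at internal nodes also matches the paper's argument.
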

\begin{proof}
Each party constructs its IBLT, and sends it up the communication
network to the root.  Non-leaf nodes excluding the root combine all
the IBLTs from leaves in their subtree, and pass them up to the root.
The root gathers all the IBLTs and sends the union of them back down
the communication tree.  (See Figure~\ref{fig:tree}.)  Messages are of size $O(|\cup_i S_i - \cap_i S_i|)$,
and each edge of the tree carries a single message in each direction.
The IBLTs allow each party to recover all keys with probability $1-O(t^{-k+2} + t/q)$.
This follows immediately from Theorem~\ref{thm:mostgen},
as again we are in the setting where $L_1 = \sum_i v_i$ and $L_2 = v_i$.
All of the IBLTs will be peeling the same set of keys, so the $O(t^{-k+2})$ term
in the failure probability is common to all $n$ parties.  The $n$ parties may have
different counts associated with different cells, however, from each adding in 
their own term $(p-n)v_i$, so we take a union bound
over the $O(t/q)$ failure probability associated with a false match from 
the \textsf{keyhashSum} field over the $n$ parties.  
(We note that if $n = p$, all $n$ parties can use the same IBLT, and this union
bound is avoided.)  Because of the way IBLTs are combined only 
$O(|\cup_i S_i - \cap_i S_i|)$ space is required at non-leaf nodes.
{\hspace*{\fill}\rule{6pt}{6pt}\smallskip}
\end{proof}

\subsection{Set Reconciliation via Gossip on General Networks}

We now show how gossip spreading techniques (also
referred to generally as rumor spreading) allow multi-party set
reconciliation over a network in $O(\phi^{-1} \log n)$ rounds of
communication, where $\phi$ is the conductance of the network, using
our IBLT framework.

In gossip spreading, there are generally two different models \cite{shah}.  In the
single-message model there is one message at a vertex in a graph with
$n$ vertices\footnote{For this problem we follow the standard notation
for gossip problems and use $n$ for the number of vertices.}, and the
goal is for every vertex to obtain that message.  In the multi-message
model, the standard setting is that for some subset of the vertices, each
vertex has a unique message, and all vertices have to obtain all of the
messages.  With a {\sc PUSH} strategy, in each round every node that
has a message contacts a random neighbor, and forwards a single
message.  The {\sc PULL} strategy is similar, but each node without a
message contacts a random neighbor and obtains a message from them.  A
{\sc PUSH}-{\sc PULL} strategy combines both of these operations in every round.
(See, for example, \cite{chierichettialmost,giakkoupis}.)
In these models, a vertex can transfer a single message to another vertex
in each round.
In the multi-message model, one can use network coding by having a vertex
send a (usually random) linear combination of the messages it holds
at this time, so that messages take (essentially) the same space, but 
can offer potentially more useful information.    (See, for example, \cite{DebMedard,haeupler}.)

Our setting does not exactly match any of these situations.  For
reconciliation, we have multiple vertices each with their own message
(the IBLT), and we can combine messages, so we appear to be similar
to the multi-message model with network coding.  However, in the
network coding setting, the eventual goal is to solve a collection of
linear equations (corresponding to the combinations of messages
received), and in that setting, each new message only provides one
more ``degree of freedom'', or one more needed equation, regardless of
the component messages it contains.  For set reconciliation, we need
not solve such a system (although that would be one way to solve the
problem); we merely need some appropriate linear combination of all of
the IBLTs, as demonstrated in Theorem~\ref{thm:mostgen}.  Also, as we
are working in the reconciliation setting, we do not require that all 
vertices obtain the reconciled sets, but only those vertices that begin
with a set initially.

Because of this, we may more naturally think of the problem as a
collection of single-message problems running in parallel, as we now
describe.  Our
approach is as follows.  We have $a \leq n$ parties
$A_1,\ldots,A_a$ with sets $S_1,\ldots,S_a$ and corresponding IBLTs $I(S_1),\ldots,I(S_a)$.
For convenience, without loss of generality we provide the argument where $a = n$, as fewer parties with
messages only makes things easier.  (Or we may think of parties without a message
as having a null message.)
Parties can use whatever single-message gossiping algorithm is available.  Messages
in this setting correspond to random mixtures of IBLTs.  That is, here we again
think of the IBLT as a vector of entries in $(\mathbb{F}_p)^b$ for a suitably large prime $p$.
(Here $p$ will need to be larger than previously to obtain a low failure probability, as we see below.)
A message will consist of a linear combination of such vectors
$$\sum_{i \in [n]} \alpha_i v_i,$$ where $v_i$ is the IBLT vector for
the $i$th party and $\alpha_i$ is a coefficient (modulo $p$).  Let the vector of the
$j$th party after $\ell$ rounds of communication be given by
$\alpha_{ij\ell}$, $i=1,\dots,n$.  Our goal is for each party $j$ to obtain a vector 
$\sum_{i \in [n]} \alpha_{ijL} v_j$ at
round $L$ where $\alpha_{ijL} \neq 0$ for all $i$.  At that point,
as a special case of Theorem~\ref{thm:mostgen},
the $j$th party can reconcile using the combined IBLT by adding
$(p-\sum_{i \in [n]} \alpha_{ijL}) v_j$ to this vector, thereby
``canceling out'' any key in the intersection of the sets.
To bound the probability that the set $X$ of non-recovered
keys is nonempty, we first need to describe how the coefficients come about.
% *** Removed, since I thought this has already been said clearly:
%Here we
%assume the use of a \textsf{count} field, which maintains a count
%modulo $p$ associated with the cell that tracks the sum of the count
%of keys hashed to the cell weighted by the appropriate
%coefficients $\alpha_i$, so that one can test if there is a count
%greater than 1 associated with a single key in a cell efficiently.

The protocol runs as follows. %**** Symbol t is overloaded. Change?
Each vertex holds one linear combination of IBLTs at any time; after round number $\ell$ party $j$ stores
$$\sum_{i \in [n]} \alpha_{ij\ell} v_i$$
as well as the coefficient sum $\alpha_{j\ell} = \sum_{i \in [n]} \alpha_{ij\ell}$.
To send a message, a party $j$ chooses a random $\kappa_{j\ell} \neq 0$ modulo $p$
and sends
$$\kappa_{j\ell} \sum_{i \in [n]} \alpha_{ij\ell} v_i.$$
together with the corresponding coefficient sum $\kappa_{j\ell} \alpha_{j\ell} \bmod p$. 
Each party receiving a message simply adds it to its current message.
The probability of ever ``zeroing out'' a coefficient using this approach is negligible
for suitably large $p$.
To see this, notice that each coefficient $\alpha_{ij\ell}$ is a (non-zero) multilinear polynomial of degree at most $\ell$ of the random multipliers $\kappa_{j\ell}$ applied to each message. This implies that the probability that $\alpha_{ij\ell}=0$ is $O(\ell/p)$, by the Schwartz-Zippel lemma \cite{Schwartz,Zippel}. Similarly, the probability for each set $T$ that $\sum_{i\in T} \alpha_{ij\ell}$ assumes a given value is $O(\ell/p)$, implying that $X$ is empty with probability $1-O(t\ell/p)$ for each party.

Now we examine the protocol from the point of view of a single
message.  For a single message, the protocol behaves exactly as the
single-message protocol;  the fact that other IBLTs may be
piggy-backing along in a shared message does not make any difference.  
Hence, we can treat this as multiple single-message
problems running in parallel, and apply a union bound on the failure
probability.  Generally, standard results for single-message problems
come with an exponentially decreasing tail bound for the probability of not finishing
after a given number of rounds.  Assuming this, an additive $O(\log n)$ steps over a
standard single-message result are sufficient to guarantee, via union
bound, that the $n$ parallel problems all complete.

As a specific example, we can consider the best current results on the standard
{\sc PUSH}-{\sc PULL} protocol for gossip spreading \cite{giakkoupis}.  Let $\phi$ be the
conductance of a communications graph $G$ with $n$ vertices, where $a$
of the $n$ vertices wish to reconcile their sets.  We can prove the following
theorem.
\begin{theorem}
\label{thm:pushpull}
Set reconciliation on a graph of $n$ vertices with $a = O(n)$ parties
having sets to reconcile can be accomplished in time $O(\phi^{-1} \log
n)$ using the standard randomized {\sc PUSH}-{\sc PULL}
protocol with messages of size
$$O(|\cup_i S_i - \cap_i S_i|)$$
with success probability 
$$1-O(t^{-k+2} + nt/q+ntL/p+n^2L/p+n^{-\beta})$$
for any constant $\beta > 0$.
\end{theorem}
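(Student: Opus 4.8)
The plan is to combine three ingredients: (1) a known gossip result bounding the number of rounds for the single-message PUSH-PULL protocol, (2) a reduction of our multi-message reconciliation to $n$ parallel single-message problems, and (3) a union bound over all the failure modes already identified in the excerpt. The guiding observation, made just before the theorem statement, is that from the point of view of any one fixed source IBLT $v_i$, the random-linear-combination protocol propagates $v_i$ exactly as the single-message gossip protocol would propagate a single token; the other IBLTs riding along in the same message do not interfere with whether $v_i$ reaches a given party. So each party $j$ receives a coefficient $\alpha_{ijL}$ on source $i$, and ``party $j$ has heard from source $i$'' corresponds precisely to the single-message event that the token from $i$ has reached $j$.

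\medskip

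\noindent\textbf{Step 1 (Round complexity).} First I would invoke the result of Giakkoupis on the standard PUSH-PULL protocol: a single message started at one vertex reaches all $n$ vertices in $O(\phi^{-1}\log n)$ rounds with high probability, and moreover the tail decays fast enough that failing to finish within $O(\phi^{-1}\log n) + O(\log n)$ rounds has probability $O(n^{-\beta-1})$ for any constant $\beta$. Set $L = O(\phi^{-1}\log n)$ to be this round count. Since there are $a = O(n)$ sources, a union bound over the $a$ single-message problems gives that \emph{every} party has heard from \emph{every} source within $L$ rounds, i.e.\ $\alpha_{ijL}\neq 0$ for all $i,j$, except with probability $O(n\cdot n^{-\beta-1}) = O(n^{-\beta})$. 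This yields the $n^{-\beta}$ term and the claimed running time.

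\medskip

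\noindent\textbf{Step 2 (No coefficient or subset-sum vanishes).} Conditioned on the connectivity event above, each party $j$ holds $\sum_i \alpha_{ijL} v_i$ with all $\alpha_{ijL}\neq 0$, and the coefficient sum $\alpha_{jL}\neq 0$ likewise with high probability. By the paragraph preceding the theorem, each $\alpha_{ijL}$ is a nonzero multilinear polynomial of degree $\le L$ in the random multipliers, so by Schwartz--Zippel each individual coefficient is zero with probability $O(L/p)$, and for each of the (at most $2^n$, but effectively $O(t)$ relevant) subset-sums $\sum_{i\in T}\alpha_{ijL}$ the probability of hitting a bad value is $O(L/p)$. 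Taking a union bound over the $n$ coefficients for a single party gives $O(nL/p)$, and over the $n$ parties gives the $O(n^2 L/p)$ term; the per-party subset-sum union bound that forces $X=\emptyset$ over all $t$ keys in the union contributes $O(tL/p)$ per party, hence $O(ntL/p)$ overall. With $X=\emptyset$ we are exactly in the situation of Theorem~\ref{thm:mostgen} with $L_1 = \sum_i \alpha_{ijL} v_i$, $L_2 = v_j$, and $\gamma = -\alpha_{jL}\,\beta^{-1}$ (where $\beta$ is the coefficient of $v_j$, namely $\alpha_{jjL}\neq 0$): party $j$ cancels all keys in the global intersection and peels out $\cup_i S_i - \cap_i S_i$.

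\medskip

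\noindent\textbf{Step 3 (IBLT decoding failures).} Finally I apply Theorem~\ref{thm:mostgen} to each party's combined IBLT. The peeling (2-core) failure contributes $O(t^{-k+2})$; since all parties are peeling the same underlying set of keys, this term is shared and not multiplied by $n$. The \textsf{keyhashSum} false-positive term is $O(t/q)$ per party, and because different parties carry different counts in their cells we union-bound over $n$ parties to get $O(nt/q)$, exactly as in the tree theorem. Adding all the failure terms --- $O(t^{-k+2})$ from peeling, $O(nt/q)$ from hash false positives, $O(ntL/p)$ from vanishing subset-sums, $O(n^2L/p)$ from vanishing individual coefficients, and $O(n^{-\beta})$ from gossip non-completion --- gives the stated success probability. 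The message size is $O(|\cup_i S_i - \cap_i S_i|)$ because a linear combination of IBLTs occupies the same space as a single IBLT, sized to the threshold $t \geq |\cup_i S_i - \cap_i S_i|$.

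\medskip

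\noindent\textbf{The main obstacle} I anticipate is \emph{cleanly decoupling} the round-complexity argument (Step 1) from the algebraic nondegeneracy argument (Step 2). The two events are not independent --- the random multipliers $\kappa_{j\ell}$ that determine which coefficients are nonzero are the same randomness that, combined with the gossip schedule, governs propagation. The clean way around this is the reduction to single-message problems: the \emph{combinatorial} event ``source $i$ has reached party $j$'' depends only on the gossip schedule (who contacts whom), not on the field multipliers, so Step 1 can be argued purely over the schedule randomness. Given that combinatorial reachability holds, the corresponding coefficient $\alpha_{ijL}$ is a nonzero polynomial over the \emph{independent} multiplier randomness, and Schwartz--Zippel applies conditionally. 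Making this conditioning rigorous --- that reachability forces the polynomial to be nonidentically zero, so that Schwartz--Zippel then bounds the probability it evaluates to zero --- is the delicate point, but it is exactly what the ``multiple single-message problems running in parallel'' framing is designed to deliver.
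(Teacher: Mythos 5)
Your proposal is correct and follows essentially the same route as the paper's proof: the same reduction to $n$ parallel single-message problems bounded by Giakkoupis's Theorem 1.1, the same Schwartz--Zippel union bounds giving the $O(n^2L/p)$ and $O(ntL/p)$ terms, and the same application of Theorem~\ref{thm:mostgen} with the shared $O(t^{-k+2})$ peeling term and per-party $O(t/q)$ hash-collision term. The only slip is notational: with $L_2 = v_j$ the coefficient sum in Theorem~\ref{thm:mostgen} is $\beta = 1$, so $\gamma = -\alpha_{jL}$ rather than $-\alpha_{jL}\,\alpha_{jjL}^{-1}$; this does not affect the argument, and your explicit discussion of decoupling the schedule randomness from the multiplier randomness is a worthwhile clarification the paper leaves implicit.
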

\begin{proof}
As mentioned without loss of generality we take the case where $a = n$.
We choose a suitable stopping time $L = O(\phi^{-1} \log n)$ based on the
choice of the constant $\beta$ that would be
suitable for $n$ parallel versions of the single-message {\sc PUSH}-{\sc PULL} gossip protocol to 
successfully complete with high probability, as guaranteed by Theorem 1.1
of \cite{giakkoupis}.  

We now apply Theorem~\ref{thm:mostgen}.  From our discussion above, we
have that after $L$ rounds the $i$th party will store $L_1$ that
is a linear combination $\sum_{i \in [n]} \alpha_{ijL} v_i$, as well
as $L_2 = v_i$.  Here the $\alpha_{ijL}$ are all non-zero with high
probability because of the use of random coefficients as discussed above; this
probability it at most $O(n^2 L/p)$ by the union bound as there are $n^2$
coefficients and each is 0 with probability at most $O(L/p)$.  

We further claim the set $X$ is empty for all parties with high probability, also
because of the use of random coefficients.  To see this, consider
any key $x$ %***of at most $t$ key 
in $\cup_i S_i - \cap_i S_i$.  For
$x$ to not be recoverable by the IBLT by the $i$th party,
it would require that $\sum_{j \in T} \alpha_{ijL} = 0 \bmod p$, where
$T$ is again the set of parties that have $x$ as a key.  This
happens with probability at most $O(tL/p)$ as discussed above.
Hence, by a union bound over parties, the probability that this
happens over all parties is at most $O(ntL/p)$.  

We therefore obtain full recovery for all parties using randomized {\sc PUSH}-{\sc PULL}, with
high probability.  Note that again we must take into account that each party has different
coefficients in their IBLT, and hence one must apply a union bound to cover
the possibility of false positives from the \textsf{keySum} and \textsf{keyhashSum} fields 
over all IBLTs.  However, the recovery process will be the same for all IBLTs, since they
all involve the same keys.  Our final probability bound includes this accounting.  
{\hspace*{\fill}\rule{6pt}{6pt}\smallskip}
\end{proof}

\subsection{Experiments}

We briefly describe some experiments designed to test the gossip algorithms approach.
We emphasize that the experiments were meant as ``proof-of-concept'', and not an extensive
experimental test.\footnote{These experiments were performed by Marco Gentili, who we thank
for allowing their use in this paper.}   

We choose as our test graphs random graphs where each edge is included
independently with probability $\frac{2 \ln n}{n}$; this is sufficient
to guarantee the graph is connected (with high probability).  For our
experiments, each each node is a party to the protocol, and each party's set is simply one element, with all sets
being distinct.  We use IBLTs of $2n$ cells (more than needed) to
ensure listing succeeds with high probability.  The choice of sets
does not significantly impact the failure probability.  Every time a
party receives a message, it adds the corresponding IBLT multiplied by
a random multiplier into its linear combination of IBLTs.  We use the
{\sc PUSH}-{\sc PULL} protocol as described previously.  We also
determined with preliminary experiments the number of rounds needed to
ensure that all parties would receive the information held from all
parties with high probability, and used this many rounds.  By doing so, we limit our failures to resulting from the
zeroing out of coefficients of the linear combination of IBLTs.  This
failure probability therefore corresponds to the $O(ntL/p + n^2L/p)$
term from Theorem~\ref{thm:pushpull}.

\begin{table}[!htb]
\caption{ Success Rate of Listing Keys (Averaged over 1000 trials) with $p=1000000007$}\label{table:p1000000007}
\centering
\begin{tabular}{|l|l|l|l|l|l|l|}
\hline
\begin{tabular}[c]{@{}l@{}}\# Parties\end{tabular} & \begin{tabular}[c]{@{}l@{}}\% \\Retrieving \\     All\\     Msgs\end{tabular} & \begin{tabular}[c]{@{}l@{}}\% \\Missing\\     1      Msg\end{tabular} & \begin{tabular}[c]{@{}l@{}}\% \\Missing\\     \textgreater1     Msg\end{tabular} & \begin{tabular}[c]{@{}l@{}}\# \\Retrieving \\     All     Msgs\end{tabular} & \begin{tabular}[c]{@{}l@{}}\# \\Missing\\     1      Msg\end{tabular} & \begin{tabular}[c]{@{}l@{}}\# \\Missing\\     \textgreater1     Msg\end{tabular} \\ \hline
10                                                                    & 100.00\%                                                                        & 0.00\%                                                                    & 0.00\%                                                                               & 10000                                                                           & 0                                                                         & 0                                                                                    \\ \hline
20                                                                    & 100.00\%                                                                         & 0.00\%                                                                    & 0.00\%                                                                               & 20000                                                                           & 0                                                                         & 0                                                                                    \\ \hline
40                                                                    & 100.00\%                                                                         & 0.00\%                                                                    & 0.00\%                                                                               & 40000                                                                           & 0                                                                         & 0                                                                                    \\ \hline
80                                                                    & 100.00\%                                                                         & 0.00\%                                                                    & 0.00\%                                                                               & 80000                                                                           & 0                                                                         & 0                                                                                    \\ \hline
160                                                                   & 100.00\%                                                                         & 0.00\%                                                                    & 0.00\%                                                                               & 160000                                                                          & 0                                                                        & 0                                                                                    \\ \hline
320                                                                   & 100.00\%                                                                         & 0.00\%                                                                    & 0.00\%                                                                               & 320000                                                                          & 0                                                                       & 0                                                                                    \\ \hline
640                                                                   & 100.00\%                                                                         & 0.00\%                                                                    & 0.00\%                                                                               & 640000                                                                          & 0                                                                       & 0                                                                                    \\ \hline
1280                                                                  & 100.00\%                                                                         & 0.00\%                                                                    & 0.00\%                                                                               & 1279999                                                                         & 1                                                                      & 0                                                                                    \\ \hline
\end{tabular}
\end{table}

Table \ref{table:p1000000007} shows the success rate for listing keys using $p = 1000000007$, averaged over 1000 trials.  All keys for $n$ up to 640 were reconciled; for $n=1280$, one key from one party is not recovered in one trial.  (Various backup measures could be used to easily handle such rare cases.)  Note that this
value of $p$ would fit into a 32-bit integer and is not unreasonable for calculations.  Other experiments
with smaller values of $p$ shows that the failures occur at a rate roughly inversely proportional to $p$, as suggested by Theorem~\ref{thm:pushpull}.   

\section{Conclusion}
Previous solutions for set reconciliation, based primarily on
Reed-Solomon codes, have not (as far as we are aware) been generalized
to the multi-party setting, and it is not immediately clear how to do
so.  We have shown here that methods based on Invertible Bloom Lookup
Tables, which require additional space but only linear time,
generalize naturally, and further they do so in a way that allows the
application of network-coding based techniques.  Hence, by utilizing
network coding methods, we can obtain high efficiency in terms of the
number of network messages required, as well as small messages because
IBLTs and combinations of IBLTs have length proportional to the
generalized set difference.  While we expect our approach might be improved
further, providing better space utilization or smaller probability of
error either by theoretical improvements or by careful implementation, we
believe this work represents an important step in establishing
more practical solutions to the multi-party set reconciliation problem than 
approaches based on pairwise interactions.

There are, of course, quite a number of linear sketches in the
literature beyond IBLTs, and combining such sketches is a fairly common
technique.  Our work emphasizes how IBLTs can naturally lead to
reconciliation algorithms that take advantage of methods based on
network coding.  It would be very interesting if a general statement
formalizing this connection more concretely could be developed, or,
alternatively, if we can find other cases where the utility of linear
sketches can be increased by applying network coding techniques to expand
their capabilities or efficiency.

We observe that in settings where point-to-point messages can be
transmitted more efficiently than by broadcasting, there is potential
in some cases to decrease the total size of messages sent and received
by each party. For example, this is the case when each set $S_i$ lacks
$t$ elements from $\cup_i S_i$, and each of these elements is present in all other
sets. Then running a single pairwise set reconciliation protocol with point-to-point messages of size $O(t)$ suffices for each
party. However, the IBLT would require each party to send and receive
messages of size $O(tn)$.  Of course, in this example the parties are making
use of the knowledge that all other parties already hold the missing items,
but it shows that there are settings where IBLTs will not be optimal.  
More generally, we leave it as an open problem to explore the possible 
trade-offs in using or combining various set reconciliation protocols in 
additional settings.

\section*{Acknowledgments}  The first author thanks George Varghese for suggesting 
the problem of multi-party set synchronization.  

%\bibliography{pods}
\bibliographystyle{plain}

\end{document}

%%%%% POSTPONED

\subsection{Set reconciliation on Stars}

The case of $n$ parties connected through a relay node, a ``star'' topology, is motivated e.g.~by networked clusters of computers. We find
that by passing IBLTs through the relay, we can arrange for $n$-party
set reconciliation using $2n$ messages on a wired network with point-to-point communication, and $n+1$
messages on a network where the relay is able to broadcast the sum of
the IBLTs to all parties as a single message. It is clear in both cases that this is the smallest possible number of messages. In some cases the total message size is near-optimal as well. This is true, for example, when each element in $\cup_i S_i - \cap_i S_i$ is missing from a constant fraction of the sets (on average). Then it is clear that the total amount of information received by the $n$ parties must be $\Omega(n |\cup_i S_i - \cap_i S_i|)$, which is asymptotically the same as the total size of all messages.

However, there are cases in which IBLTs do not yield an optimal amount of communication. It is instructive to compare to the simplistic natural approach of using point-to-point messages to reconcile pairwise differences. Suppose each set $S_i$ lacks $t$ elements from $\cup_i S_i$, each of which are present in all other sets. Then $n$ messages of size $O(t)$ suffice for set reconciliation. However, the IBLT would require messages of size $O(tn)$.

To approach ``the best of both worlds'' we suggest the following hybrid approach: First, the parties are randomly split by assigning each party a random color in $\{1,\dots,\lceil cn/\log n\rceil\}$, for some sufficiently small constant $c$. We assume that the relay node is capable of transmitting a message exclusively to parties with a particular color (e.g.~in a wireless network by using a different frequency for each color).
As a first phase we run a ``local'' set reconciliation on each color group, such that each color group has the same set of items.
If the network makes it possible these reconciliations can be run in parallel.
As the second and final phase a ``global'' set reconciliation is performed to reconcile items that are not present in all groups.
In the two phases we need to find suitable sizes for the IBLTs, using one of the methods discussed previously --- we assume that the chosen size is within a constant factor of the smallest possible.

The intuition behind this protocol is that the first phase reconciles items that are widespread, and are present in each random set of $O(\log n)$ parties with high probability. 
The grouping has the effect that reconciliation of these items increases IBLT size only in the groups where it is needed. 
In the second phase, items that were missing from many parties (say, at least $n/2$) are reconciled, using an IBLT whose size is close to the average number of missing such items. 
More precisely, if $t_1$ is the number of items that are held by between $n/2$ and $n-1$ parties, and $t_2$ is the number of items that are held by less than $n/2$ parties, it follows from Chernoff bounds that the following hold:
\begin{fact}
	For a sufficiently small constant $c$, the number of items that are not held by all parties after the first phase is at most $t_1/n + t_2$ with high probability.
\end{fact}

This means that an IBLT of size $t_1/n + t_2$ suffices for the second phase, meaning that the total size of all messages is $O(t_1 + t_2 n)$.
Since the total number of missing items across all parties is at least $t_1 + t_2 n/2$, this amount of communication in this phase is close to optimal.
However, in the first phase $O(t_1 \log n)$ total communication is needed if each of the $t_1$ items is missing from just one party per group, so the combined scheme is not communication optimal.
We leave it as an open question to determine the optimal tradeoff between the number of rounds and communication overhead.

%%%%% OLD STUFF

\section{Explicit Hash Functions}

\subsection{Sum of key hashes}

The hash function for computing \textsf{keyhashSum} was first introduced in~\cite{OldDietzfelbingerPaper}. It is defined by a random element $a\in F_p$ for a large prime $p$. More precisely, we map the key $x$ to be an integer smaller than $p$, and let $H(x) = a^x$, where exponentiation is carried out modulo $p$. Given a multiset $S$, the sum $\sum_{x\in S} H(x)$ (modulo $p$) is the value of a polynomial $p(a)$ over $F_p$, where the coefficients correspond to the multiplicities of elements in $S$ (modulo $p$).
Elements that appear $p$ times will therefore not affect the value $p(a)$. 
When combining $\textsf{keyhashSum}$ values for $k$ sets we multiply one of the values by $p-k+1$, such that the contribution of each key present in all $k$ sets cancels out.

If we compare the sum of two different multisets, the Schwartz-Zippel lemma says that the probability that the two polynomials collide at $a$ is bounded by $a/p^t$. That is, by choosing $p^t$ suitably large we can make the error probability arbitrarily small.

((Discuss fast evaluation of $H(x)$ using a table of powers of $a$))

\subsection{Hash functions defining a peelable random hypergraph}

First construction: Enough to have $t$-wise independence. But this does not seem very practical...

Second construction: Use ideas similar to {\em Multilevel Adaptive Hashing} to construct, using $d \approx \log \log t$ pairwise independent hash functions, a $d$-regular peelable hypergraph over $O(t)$ nodes. Seems to me that this could be practical, in particular if a low error probability is desired.

(((Side note: Could one not use simple network coding ideas to extend peelability to small buckets, i.e., that we can decode up to some number $k$ of keys? Would it suffice with a single keyHashSum in this case?)))